\documentclass[conference,a4paper]{APSIPA2026}
\usepackage{amsmath,amssymb,amsfonts,amsthm,bm,microtype}
\usepackage{multirow}
\usepackage{threeparttable}
\usepackage[backend=biber,style=ieee]{biblatex}
\usepackage{algorithmic}
\usepackage{graphicx}
\usepackage{textcomp}
\usepackage{url}
\usepackage{booktabs}
\usepackage{balance}

\DeclareMathOperator{\blkdiag}{blkdiag}
\theoremstyle{definition}

\theoremstyle{plain}
\newtheorem{theorem}{Theorem}

\usepackage{geometry}
\usepackage{fancyhdr}

\fancypagestyle{firststyle}{
  \fancyhf{}
  \fancyhead[C]{2026 Asia Pacific Signal and Information Processing Association Annual Summit and Conference (APSIPA ASC)}
}

\def\BibTeX{{\rm B\kern-.05em{\sc i\kern-.025em b}\kern-.08em
    T\kern-.1667em\lower.7ex\hbox{E}\kern-.125emX}}

\begin{document}

\title{Fast Multichannel NMF with Block-Diagonal Spatial Covariance Matrices for Efficient Blind Source Separation Using Distributed Microphone Arrays}

\author{
\authorblockN{
Hirotaka Nishikori\authorrefmark{1},
Nobutaka Ito\authorrefmark{2},
Kouei Yamaoka\authorrefmark{1},
Norihiro Takamune\authorrefmark{1}, and
Hiroshi Saruwatari\authorrefmark{1}
}

\authorblockA{
\authorrefmark{1}
The University of Tokyo, Tokyo, Japan
}

\authorblockA{
\authorrefmark{2}
The National Institute of Advanced Industrial Science and Technology (AIST), Tokyo, Japan
}
}


\maketitle
\begingroup\renewcommand\thefootnote{}\footnotetext{This work was supported by JSPS KAKENHI Grant Numbers 24K23854, 25K21220 and 21H05054.}\endgroup
\thispagestyle{firststyle}
\pagestyle{empty}

\begin{abstract}
Distributed microphone arrays composed of multiple subarrays enable blind source separation over a wide spatial area.
Directly applying fast multichannel nonnegative matrix factorization (FastMNMF) to all subarrays can exploit observations from all subarrays, but it requires repeated inversions
of large matrices spanning all microphones, causing the computational cost to increase rapidly as the number of microphones grows. 
In contrast, applying FastMNMF to one subarray reduces the matrix size but cannot exploit observations from other subarrays. 
 We propose \textit{distributed FastMNMF}, which imposes a block-diagonal structure on the source spatial 
covariance matrices, so that matrix inversions are performed within subarrays.
The NMF-based source spectrogram model is shared across subarrays, allowing the method to aggregate source activity information while discarding inter-subarray covariance.
 In synchronized, noiseless simulations with fixed room and array/source geometry,
 the method required less computation time than conventional FastMNMF using all subarrays, achieved a higher average source-to-distortion ratio than conventional FastMNMF using one subarray,
 and was applicable in the tested five-source condition, where each four-microphone subarray was locally underdetermined.
\end{abstract}

\begin{IEEEkeywords}
Distributed microphone arrays, acoustic sensor network, blind source separation, fast multichannel nonnegative matrix factorization.
\end{IEEEkeywords}

\section{Introduction}
\vspace{-5pt}
Blind source separation (BSS) \cite{Makino2018} aims to separate source signals from observed mixtures without detailed prior information.
 Independent vector analysis (IVA) \cite{Hiroe2006, Kim2007} and independent low-rank matrix analysis (ILRMA) \cite{ILRMA} are efficient in determined or overdetermined conditions,
 where microphones are at least as many as sources.
 However, they are generally not applicable to underdetermined mixtures,
 where there are more sources than microphones, and are not well suited to modeling diffuse noise.
 Multichannel nonnegative matrix factorization (MNMF) \cite{Ozerov2010, Sawada2013} can handle underdetermined conditions and model diffuse noise
 by using full-rank spatial covariance matrices (SCMs), but requires repeated matrix inversions, which are extremely costly.
 FastMNMF \cite{Ito2019,Sekiguchi2019,Ikeshita2018} reduces this cost while maintaining separation performance comparable to that of MNMF by assuming jointly diagonalizable SCMs.


While conventional array signal processing typically assumes a single compact array,
 distributed arrays enable BSS over a wide area using spatially separated microphones or subarrays,
 but practical use requires synchronization/calibration \cite{Masuyama2024} and often microphone clustering \cite{Kindt2023}.
 Moreover, designing BSS algorithms for distributed arrays is itself challenging because such algorithms must
 exploit information across many subarrays while remaining robust to less reliable inter-subarray phase relations and avoiding prohibitive computational
 cost as the number of microphones grows \cite{Sumura2024,Yamaoka2025}.
 Here we focus on reducing the computational cost of distributed-array BSS under the assumption that synchronization/calibration and microphone clustering
 have been addressed.
 In this paper, ``distributed'' refers to the array geometry composed of spatially separated subarrays.
 The proposed algorithm is not a decentralized one with explicit communication constraints but a centralized one using observations from all subarrays.
 Evaluating the proposed BSS method under synchronization/calibration errors and extending it to joint microphone clustering are left for future work.

 To reduce the computational cost of distributed-array BSS, we propose \textit{distributed FastMNMF} by imposing a block-diagonal structure on the source SCMs,
 where each block corresponds to one subarray. A naive extension of FastMNMF is to process all subarrays jointly as a single large array,
 which can exploit information across subarrays but incurs rapidly increasing computational cost as the number of microphones grows.
 Another option is to process only one subarray, which is computationally efficient but yields limited separation performance because
 it cannot exploit information from other subarrays. Unlike FastMNMF using all subarrays, the proposed method performs joint diagonalization within each subarray,
 reducing the sizes of the matrices to be diagonalized/inverted.
 Unlike FastMNMF using one subarray, it shares the source spectrograms across subarrays,
 thereby aggregating source spectrogram information while discarding inter-subarray covariance.
 The goal is not to outperform FastMNMF using all subarrays, but to provide a computationally efficient intermediate model between FastMNMF using
 all subarrays and that using one subarray.

Compared with distributed MNMF-based methods \cite{Sumura2024}, the proposed method uses joint diagonalizability and block-diagonality for efficiency.
 Unlike decentralized IVA \cite{Yamaoka2025}, it can handle locally underdetermined subarrays, and unlike transfer-function-gain NMF \cite{Togami2010,TFGNMF_chiba2014}, it can exploit within-subarray phase information.

\vspace{-5pt}
\section{Preliminaries}
\vspace{-2pt}
\subsection{FastMNMF \cite{Ito2019,Sekiguchi2019,Ikeshita2018}}
\vspace{-2pt}
Suppose $N$ source signals are mixed and observed by $M$ microphones.
Here, $m \in \{1,\dots,M\}$ and $n \in \{1,\dots,N\}$ denote the microphone and source indices, respectively.
Let $\bm{x}_{ij} = (x_{ij1}, \dots, x_{ijM})^{\mathsf{T}}\in \mathbb{C}^M$ denote the short-time Fourier transform (STFT) coefficients of the observed signals,
 where $i \in \{1,\dots,I\}$ and $j \in \{1,\dots,J\}$ are the frequency-bin and time-frame indices, respectively, and $\cdot^{\mathsf{T}}$ represents
 the transpose.
In addition, $\cdot^{\mathsf{H}}$, $\det$, and $\ln$ denote the Hermitian transpose, determinant, and natural logarithm, respectively.

$\bm{x}_{ij}$ is modeled as the sum of source images $\bm{c}_{ijn}$,
 each of which represents the contribution of a source to all microphones.
Assuming that each $\bm{c}_{ijn}$ follows a multivariate complex Gaussian distribution with zero mean and covariance matrix
$h_{ijn}\bm{R}_{in}$ and that $\{\bm{c}_{ijn}\}_n$ are mutually statistically independent, the closure property of the multivariate complex Gaussian
distribution yields the generative model of $\bm{x}_{ij}$:
\begin{align}
p(\bm{x}_{ij}) 
&= \mathcal{N}_{\mathbb{C}}\bigg(\bm{x}_{ij}; \bm{0}, \sum_n h_{ijn}\bm{R}_{in}\bigg).
\label{eq:generativemodel}
\end{align}
Here, $h_{ijn}\in \mathbb{R}_{\geq 0}$ and $\bm{R}_{in}\in \mathbb{S}^{M}_{+}$ denote the source spectrogram and the SCM of each source, respectively.
$\mathbb{R}_{\geq 0}$ is the set of nonnegative real numbers and $\mathbb{S}^{M}_{+}$ is the set of $M\times M$
complex Hermitian positive semidefinite matrices.
Note that the observation covariance matrix $\sum_n h_{ijn}\bm{R}_{in}$ is assumed to be positive definite so that \eqref{eq:generativemodel} is well-defined.
Furthermore, $h_{ijn}$ is modeled using the NMF model 
\begin{align}
  h_{ijn} = \sum_{k} t_{ikn}v_{kjn},
  \label{eq:nmf}
\end{align}
where $t_{ikn}, v_{kjn}\in \mathbb{R}_{\geq 0}$ denote the spectral basis and its temporal activation,
 respectively, and $k\in \{1,\dots, K \}$ is the basis index.

MNMF \cite{Sawada2013} does not impose a particular constraint on $\bm{R}_{in}$, whereas
 FastMNMF assumes joint diagonalizability of $\bm{R}_{in}$ across all sources to reduce computational cost: 
\begin{equation}
 \bm{W}_i^{\mathsf{H}} \bm{R}_{in} \bm{W}_i = \bm{\Lambda}_{in}, \qquad \forall n=1,\dots,N,
\label{eq:joint}
\end{equation}
where $\bm{W}_i=(\bm{w}_{i1},\dots,\bm{w}_{iM})\in \mathbb{C}^{M\times M}$ is a nonsingular transformation matrix and $\bm{\Lambda}_{in}\in \mathbb{S}^{M}_{+}$ is diagonal.
Then, $\bm{y}_{ij}=\bm{W}_{i}^{\mathsf{H}} \bm{x}_{ij}$ follows a multivariate complex Gaussian distribution with zero mean and covariance matrix $\sum_nh_{ijn}\bm{\Lambda}_{in}$.
Since $\bm{W}_i$ decorrelates the observed signals,
 we call $\bm{y}_{ij}=(y_{ij1},\dots, y_{ijM})^{\mathsf{T}}\in\mathbb{C}^M$ the decorrelated observed signals.

The negative log-likelihood up to an additive constant is
\begin{equation}
\begin{aligned}
&\sum_{i,j,m}
\biggl(
\frac{\lvert y_{ijm} \rvert^{2}}
{\sum_{k,n} t_{ikn} v_{kjn} [\bm{\Lambda}_{in}]_{mm}}
+
\ln\sum_{k,n} t_{ikn} v_{kjn} [\bm{\Lambda}_{in}]_{mm}
\biggr)\\
&-
\sum_{i} J \ln \lvert \det \bm{W}_i \rvert^{2},
\end{aligned}
\label{eq:cost_fastmnmf}
\end{equation}
where $[\bm{\Lambda}_{in}]_{mm}$ denotes the $(m,m)$th element of $\bm{\Lambda}_{in}$.

Sekiguchi {\it et al.} \cite{Sekiguchi2019} proposed update rules based on iterative projection (IP) \cite{IP1} and majorization-minimization (MM) algorithm,
 which guarantee a monotonic non-increase of \eqref{eq:cost_fastmnmf}.
 The update rules for $\bm{W}_i$ based on IP are
\begin{equation}
\!\!\!\!\!\!\!\!\!\!\!\!\!\!\!\!\!
\bm{Q}_{im}
\gets
\frac{1}{J}
\sum_j
\eta_{ijm}^{-1}
\bm{x}_{ij} \bm{x}_{ij}^{\mathsf{H}},
\label{eq:Q_update_f}
\end{equation}
\begin{equation}
\!\!\!\!\!\!\!\!\!\!\!\!\!\!\!\!\!
\bm{w}_{im}
\gets
\left(
\bm{W}_i^{\mathsf{H}} \bm{Q}_{im}
\right)^{-1}
\bm{e}_m,
\label{eq:w_update_f}
\end{equation}
\begin{equation}
\bm{w}_{im}
\leftarrow
\bm{w}_{im}
\left(
\bm{w}_{im}^{\mathsf{H}}
\bm{Q}_{im}
\bm{w}_{im}
\right)^{-\frac{1}{2}},
\label{eq:w_normalize_f}
\end{equation}
where $\bm{e}_m\in \mathbb{R}^{M}$ denotes the $m$th column vector of the $M$-dimensional identity matrix and $\eta_{ijm} = \sum_{k,n} t_{ikn}v_{kjn} [\bm{\Lambda}_{in}]_{mm}$.
The update rules for $t_{ikn}$, $v_{kjn}$, and $[\bm{\Lambda}_{in}]_{mm}$ based on the MM algorithm are 
\begin{equation}
\quad \quad \, t_{ikn} \leftarrow t_{ikn} \sqrt{\frac{\sum_{j,m} v_{kjn} [\bm{\Lambda}_{in}]_{mm} \lvert y_{ijm}\rvert^{2} \eta_{ijm}^{-2}}{\sum_{j,m} v_{kjn} [\bm{\Lambda}_{in}]_{mm} \eta_{ijm}^{-1}}}, \\
\label{eq:t_f}
\end{equation}
\begin{equation}
\quad \:\:\:\; v_{kjn} \leftarrow v_{kjn} \sqrt{\frac{\sum_{i,m} t_{ikn} [\bm{\Lambda}_{in}]_{mm} \lvert y_{ijm}\rvert^{2} \eta_{ijm}^{-2}}{\sum_{i,m} t_{ikn} [\bm{\Lambda}_{in}]_{mm} \eta_{ijm}^{-1}}}, \\
\label{eq:v_f}
\end{equation}
\begin{equation}
[\bm{\Lambda}_{in}]_{mm} \leftarrow [\bm{\Lambda}_{in}]_{mm} \sqrt{\frac{\sum_{j,k} t_{ikn}v_{kjn} \lvert y_{ijm}\rvert^{2} \eta_{ijm}^{-2}}{\sum_{j,k} t_{ikn}v_{kjn} \eta_{ijm}^{-1}}}.
\label{eq:Lambda_f}
\end{equation}
Here, $\eta_{ijm}$ is updated after every update in \eqref{eq:t_f}--\eqref{eq:Lambda_f}.
The parameters are estimated by alternately applying \eqref{eq:Q_update_f}--\eqref{eq:Lambda_f} for a fixed number of iterations.
After that, the source images are estimated using the multichannel Wiener filter \cite{Ito2019}.

FastMNMF reduces computational cost by assuming the joint diagonalizability.
MNMF based on the Itakura--Saito divergence \cite{Sawada2013} requires inverting $M\times M$ matrices for each time-frequency point and each iteration, and solving $IN$ algebraic Riccati equations per iteration. Both are expensive operations with complexity $\mathcal{O}(M^3)$.
In contrast, FastMNMF requires inversions of $M\times M$ matrices only for each frequency bin and each microphone per iteration, and does not require solving Riccati equations.
 However, the remaining $IM$ inversions can still be costly for large $M$.

 \vspace{-2pt}
\subsection{Naive application of FastMNMF to distributed arrays}
\vspace{-2pt}
A simple way to apply FastMNMF to distributed arrays is to apply it to all subarrays jointly.
In this case, $h_{ijn}$ and $\bm{R}_{in}$ are estimated jointly from all subarrays.
$\bm{R}_{in}$ has size $M\times M$,
where $M$ is the total number of microphones across all subarrays.
Although this method can exploit information across subarrays, its computational cost increases rapidly as the number of subarrays grows.
\eqref{eq:w_update_f} requires $IM$ inversions of $M\times M$ matrices.
Since each matrix inversion costs $\mathcal{O}(M^3)$, this part costs $\mathcal{O}(IM^4)$ per iteration.
This term grows rapidly as $M$ increases. The computational complexity per iteration and per frequency is shown in Table~\ref{tab:complexity_breakdown}.

In contrast, FastMNMF can be applied to one subarray (let it be the $l$th one).
In this case, the $l$th subarray estimates the source spectrograms $h_{ijn}$ and the SCMs $\bm{R}_{in}^{(l)}$ of size $M^{(l)}\times M^{(l)}$ independently of the other subarrays.
Here, $\bm{R}_{in}^{(l)}$ is the SCM of the $n$th source for the $l$th subarray, and $M^{(l)}$ denotes the number of microphones in the $l$th subarray.
Although this method reduces the computational complexity per iteration and per frequency to $\mathcal{O}(M^{(l)4}+JM^{(l)3}+JN(K+M^{(l)}))$, its separation performance can be limited because it cannot exploit information from other subarrays.
These methods are used as baselines in Section~IV.

\vspace{-5pt}
\section{Proposed Method}
\vspace{-2pt}
\subsection{Motivation and approach}
\vspace{-2pt}
Distributed FastMNMF imposes a block-diagonality constraint on the SCMs in addition to the joint diagonalizability constraint, with each block corresponding to a subarray.
The block-diagonal SCM should not be interpreted as the true physical SCM.
 Since the same source signal reaches multiple subarrays, the true source images can generally have nonzero
 cross-subarray covariance, provided that the subarrays are synchronized.
 The block-diagonal SCM model is therefore a computational approximation introduced for tractability.
 It may also be useful when cross-subarray spatial information is unreliable,
 but evaluating such nonideal conditions is beyond the scope of this paper.
As such, the proposed model discards inter-subarray covariance and phase relations,
 while sharing only source spectrograms modeled with NMF.
  Sharing source spectrograms is most appropriate when inter-subarray propagation delays are sufficiently short relative to the STFT window length.
 We assume that inter-subarray calibration and sampling synchronization have already been performed.

\vspace{-2pt}
\subsection{Distributed FastMNMF}
\vspace{-2pt}
Distributed FastMNMF is obtained from conventional FastMNMF by imposing on the source SCMs an additional block-diagonal structure with respect to the subarray partition.
We consider a distributed array consisting of $L$ subarrays.
Let $l \in \{1,\dots,L\}$ denote the subarray index and $M^{(l)}$ be the number of microphones in the $l$th subarray so that $M = \sum_{l} M^{(l)}$.
The microphones are indexed so
that those belonging to the same subarray are contiguous. Specifically, let 
$\bm{x}_{ij}=(\bm{x}_{ij}^{(1)\textsf{T}},\dots,\bm{x}_{ij}^{(L)\textsf{T}})^\textsf{T}$, where $\bm{x}_{ij}^{(l)}=(x_{ij1}^{(l)},\dots,x_{ijM^{(l)}}^{(l)})^\textsf{T}$ is the observation vector of the $l$th subarray.

The SCMs $\bm{R}_{in}$ are assumed to be block-diagonal with each block corresponding to a subarray:
\begin{align}
\bm{R}_{in} &= \blkdiag\left(\bm{R}_{in}^{(1)},\dots,\bm{R}_{in}^{(L)}\right),
\end{align}
where $\bm{R}_{in}^{(l)} \in \mathbb{S}^{M^{(l)}}_{+}$ denotes the SCM of the $n$th source at the $l$th subarray
 and the operator $\blkdiag$ constructs a block-diagonal matrix by arranging its matrix arguments.
In addition, \eqref{eq:joint} is assumed.
Under a positive-definiteness condition in the Appendix, this is equivalent to the joint diagonalizability of $\{\bm{R}^{(l)}_{in}\}_n$ for all $l$:
\begin{equation}
\bm{W}_i^{(l)\mathsf{H}} \bm{R}_{in}^{(l)} \bm{W}_i^{(l)} = \bm{\Lambda}_{in}^{(l)}, \qquad \forall n=1,\dots,N.
\label{eq:joint_diagonalizability}
\end{equation}
The matrices $\bm{W}_i^{(l)}=(\bm{w}_{i1}^{(l)},\dots,\bm{w}_{iM^{(l)}}^{(l)})\in \mathbb{C}^{M^{(l)} \times M^{(l)}}$ and  the diagonal matrices $\bm{\Lambda}_{in}^{(l)}\in \mathbb{S}^{M^{(l)}}_{+}$ are defined for each subarray.
As in conventional FastMNMF, the source spectrograms are modeled by \eqref{eq:nmf} and shared across all subarrays. Alternatively,
 the distributed FastMNMF model can also be interpreted as a variant of the FastMNMF model with the additional constraint that $\bm{W}_i$
 is block-diagonal $\bm{W}_i = \blkdiag(\bm{W}_i^{(1)},\dots,\bm{W}_i^{(L)})$. It minimizes the same cost function \eqref{eq:cost_fastmnmf} as conventional FastMNMF under this constraint.

The cost function based on the negative log-likelihood of the observed signals under the above model is given by
\begin{equation}
\begin{aligned}
\! &\sum_l\! \Biggl[
    \sum_{i,j,\mu}\!
    \Biggl(\!
        \frac{\lvert y_{ij\mu}^{(l)} \rvert^2}
        {\sum_{k,n}\! t_{ikn} v_{kjn} [\bm{\Lambda}_{in}^{(l)}]_{\mu\mu}}
        \!+\! \ln\! \sum_{k,n} t_{ikn} v_{kjn} [\bm{\Lambda}_{in}^{(l)}]_{\mu\mu}
    \!\Biggr) \\
&\:\:\:\quad\quad
    - \sum_i J \ln \lvert \det \bm{W}_i^{(l)} \rvert^2
\Biggr],
\end{aligned}
\label{eq:cost_dfastmnmf}
\end{equation}
where $\mu \in \{1,\dots,M^{(l)}\}$ is the microphone index within each subarray and
$\bm{y}_{ij}^{(l)}=(y_{ij1}^{(l)},\dots,y_{ijM^{(l)}}^{(l)})^{\mathsf{T}} = \bm{W}_{i}^{(l)\mathsf{H}} \bm{x}_{ij}^{(l)}$
are the decorrelated observed signals in the $l$th subarray.

The parameters $\{t_{ikn}, v_{kjn}, \bm{W}_i^{(l)}, \bm{\Lambda}_{in}^{(l)}\}$ are estimated by minimizing \eqref{eq:cost_dfastmnmf}.
Since, for each $l$, the term inside the square brackets in \eqref{eq:cost_dfastmnmf} has the same form as
 \eqref{eq:cost_fastmnmf} with respect to $\bm{W}_i^{(l)}$, 
 IP can be applied independently to each subarray:
\begin{equation}
\!\!\!\!\!\!\!\!\!\!
\bm{Q}_{i\mu}^{(l)}
\gets
\frac{1}{J}
\sum_j
\eta_{ij\mu}^{(l)-1}
\bm{x}_{ij}^{(l)} \bm{x}_{ij}^{(l)\mathsf{H}},
\label{eq:Q_update}
\end{equation}
\begin{equation}
\!\!\!\!\!\!\!\!\!\!\!\!\!\!\!
\bm{w}_{i\mu}^{(l)}
\gets
\left(
\bm{W}_i^{(l)\mathsf{H}} \bm{Q}_{i\mu}^{(l)}
\right)^{-1}
\bm{e}_\mu,
\label{eq:w_update}
\end{equation}
\begin{equation}
\bm{w}_{i\mu}^{(l)}
\leftarrow
\bm{w}_{i\mu}^{(l)}
\left(
\bm{w}_{i\mu}^{(l)\mathsf{H}}
\bm{Q}_{i\mu}^{(l)}
\bm{w}_{i\mu}^{(l)}
\right)^{-\frac{1}{2}},
\label{eq:w_normalize}
\end{equation}
where $\bm{e}_\mu$ denotes the $\mu$th column of the
 $M^{(l)}\times M^{(l)}$ identity matrix and $\eta_{ij\mu}^{(l)} = \sum_{k,n} t_{ikn}v_{kjn} [\bm{\Lambda}_{in}^{(l)}]_{\mu\mu}$.

Next, we derive the update rules for $t_{ikn}$, $v_{kjn}$, and $\bm{\Lambda}_{in}^{(l)}$.
As already mentioned, distributed FastMNMF minimizes the same cost \eqref{eq:cost_fastmnmf} as conventional FastMNMF under the additional block-diagonality constraint on $\bm{W}_i$. Therefore, with $\bm{W}_i$ fixed,
$t_{ikn}$, $v_{kjn}$, and $\bm{\Lambda}_{in}=\blkdiag(\bm{\Lambda}_{in}^{(1)},\dots,\bm{\Lambda}_{in}^{(L)})$ can be updated by the conventional
update rules \eqref{eq:t_f}--\eqref{eq:Lambda_f}, which guarantee monotonic non-increase of \eqref{eq:cost_fastmnmf}.
Here, we set $y_{ijm} = y_{ij\mu}^{(l)}$ and $\eta_{ijm} = \eta_{ij\mu}^{(l)}$, where $m=\mu+\sum_{\lambda=1}^{l-1}M^{(\lambda)}$.

Parameters are estimated by alternately applying \eqref{eq:Q_update}--\eqref{eq:w_normalize} and \eqref{eq:t_f}--\eqref{eq:Lambda_f} for a fixed number of iterations.
After that, the source images at each subarray $\bm{c}_{ijn}^{(l)}$ are estimated by applying the multichannel Wiener filter to each subarray.

Table~\ref{tab:complexity_breakdown} summarizes the computational cost per parameter-estimation iteration and per frequency bin.
Multiplying the per-frequency-bin costs in Table~\ref{tab:complexity_breakdown} by $I$, in FastMNMF (all subarrays),
\eqref{eq:Q_update_f} requires $IJM$ scalar-matrix multiplications of $M\times M$ matrices, resulting in $\mathcal{O}(IJM^3)$, whereas  \eqref{eq:w_update_f} requires $IM$ inversions of $M\times M$ matrices, resulting in $\mathcal{O}(IM^4)$.
 Both terms increase rapidly as $M$ grows.
The proposed method reduces them to $\mathcal{O}(IJ\sum_l M^{(l)3})$ and $\mathcal{O}\!\left(I\sum_l M^{(l)4}\right)$ by imposing the block-diagonality constraint.
Computing the numerators and denominators in \eqref{eq:t_f}--\eqref{eq:Lambda_f}, together with updating $\eta_{ijm}$, is common to both methods.
Since no variable depends simultaneously on both $k$ and $m$, the summations in these updates can be factorized as $\eta_{ijm} = \sum_n\left(\sum_k t_{ikn}v_{kjn}\right)[\bm{\Lambda}_{in}]_{mm}$, 
 which costs $\mathcal{O}(IJN(K+M))$.
Computing $y_{ijm}$ costs $\mathcal{O}(IJM^2)$ for FastMNMF (all subarrays) and $\mathcal{O}(IJ\sum_l M^{(l)2})$ for the proposed method.
\begin{table}[t]
\centering
\caption{Computational complexity of parameter estimation per iteration and per frequency bin.}
\label{tab:complexity_breakdown}
\vspace{-9pt}
\resizebox{\columnwidth}{!}{%
\begin{tabular}{c|cc}
\toprule
 & FastMNMF (all subarrays) & Distributed FastMNMF \\
\midrule
$\bm{W}_{i}$ & $\mathcal{O}\left(M^{4}\!+\!JM^{3}\!\right)$ & $\mathcal{O}\!\left(\sum_l\! M^{(l)4}\!+\!J\sum_l\! M^{(l)3}\!\right)$ \\[3pt]
$t_{ikn}, v_{kjn}, \bm{\Lambda}_{in}\!$ & $\mathcal{O}\left(JM^{2}\!+\!JN\left(K\!+\!M\right)\right)$ & $\mathcal{O}\left(J\sum_l\!M^{(l)2}\!+\!JN\left(K\!+\!M\right)\right)$ \\
\midrule
Total & $\mathcal{O}\left(M^{4}\!+\!JM^{3}\!+\!JN\left(K\!+\!M\right)\right)$ & $\mathcal{O}\!\left(\sum_l\! M^{(l)4}\!+\!J\sum_l\! M^{(l)3}\!+\!JN\left(K\!+\!M\right)\right)$ \\
\bottomrule
\end{tabular}
}
\vspace{-15pt}
\end{table}

Distributed FastMNMF reduces to conventional FastMNMF when $L=1$ and is similar to the transfer-function-gain NMF \cite{TFGNMF_chiba2014}
 when $L=M$, where $\bm{R}_{in}$ is diagonal and models only the amplitudes of the observed signals.

\vspace{-5pt}
\section{Experiments}
\vspace{-2pt}
\subsection{Experimental conditions}
\vspace{-2pt}

As a preliminary evaluation of the proposed method, we conducted experiments simulating distributed-array BSS.
All algorithms were implemented in Python 3.12.7 with NumPy 1.26.0, SciPy 1.14.1, and Scikit-learn 1.8.0.

We used Pyroomacoustics 0.8.4 to generate room impulse responses
for a room with dimensions of $6\,\mathrm{m} \times 4\,\mathrm{m} \times 2.5\,\mathrm{m}$,
in which three subarrays and either three or five point sources were placed at a height of $1.5\,\mathrm{m}$ as illustrated in Fig.~\ref{fig:room}.
The centroids of the three subarrays were fixed at $(2, 2)$\,m, $(3, 2)$\,m, and $(4, 2)$\,m.
Each subarray consisted of four microphones at the vertices of a regular tetrahedron with an edge length of $4.2$\,cm.
The base of each tetrahedron was parallel to the floor.
One base edge of the left subarray was parallel to the $x$-axis, and the middle and right subarrays were rotated clockwise by $45^\circ$ and $90^\circ$, respectively.
For the three-source case, the source coordinates were $(1, 1)$\,m, $(3, 3.5)$\,m, and $(5, 1)$\,m.
For the five-source case, two sources at $(1.5, 3)$\,m and $(4.5, 3)$\,m were added.
Note that the three- and five-source cases correspond to determined and underdetermined
 conditions for each subarray, respectively.
The reverberation time was set to $\mathrm{RT}_{60}=300$\,ms, and the wall energy absorption and the maximum order of the image source method were computed using \texttt{pyroomacoustics.inverse\_sabine}.

As dry sources, we used speech signals from the JNAS corpus \cite{jnas}, adjusted to $10$\,s by truncating or repeating them.
For the three-source and five-source conditions, we generated $120$ mixtures by randomly selecting speech files so that no mixture contained the same speaker or utterance twice.
 Each possible male--female composition was equally represented across mixtures.
 The dry sources were convolved with the room impulse responses, and mixed so that all source images had the same power at a reference microphone in the left subarray in Fig.~\ref{fig:room}.
 The sampling frequency was $16$\,kHz.
A Hann window of $256$\,ms with a $64$\,ms shift, selected based on preliminary experiments, was used for the STFT.
No additive noise or sampling asynchrony was considered.

We compared FastMNMF (all subarrays) \cite{Sekiguchi2019} using all $12$ microphones,
 FastMNMF (one subarray) using the left subarray in Fig.~\ref{fig:room}, and distributed FastMNMF using all subarrays.
 All methods used $K=16$ and $200$ iterations.
All parameters in denominators and logarithms are optimized over the strictly positive domain.
The NMF variables and diagonal SCM entries are initialized with positive values, and the transformation matrices are initialized to be nonsingular.
The updates of $\eta_{ijm}$, the denominators in \eqref{eq:t_f}--\eqref{eq:Lambda_f}, and the normalization in \eqref{eq:w_normalize_f}, \eqref{eq:w_normalize} were floored at $10^{-6}$ to avoid division by zero.
 If $\bm{Q}_{im}$ is singular, the pseudo-inverse is used for \eqref{eq:w_update_f}.

\begin{figure}[t]
\centering
\includegraphics[width=0.6\columnwidth]{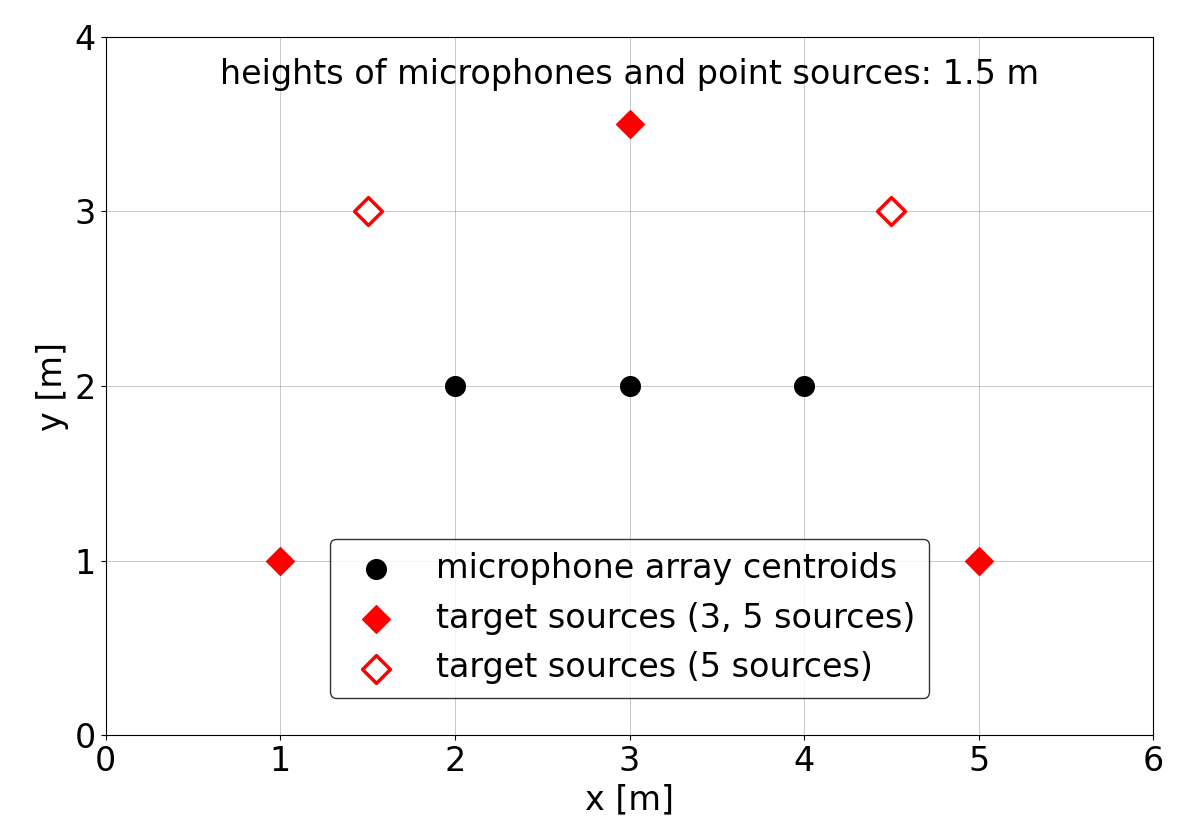}
\vspace{-9pt}
\caption{Room configuration for room impulse response generation.}
\label{fig:room}
\vspace{-15pt}
\end{figure}

The parameters were initialized as follows.
1) Time-frequency masks were estimated by frequency bin-wise clustering, followed by frequency-permutation alignment that combines local optimization
 with global optimization using one centroid per source \cite{Sawada2011}.
FastMNMF (all subarrays) used observations from all subarrays, while FastMNMF (one subarray) used only the left subarray $l=1$ in Fig.~\ref{fig:room} for mask estimation.
Distributed FastMNMF estimated masks independently in each subarray and then aligned the inter-subarray permutation by maximizing the sum of soft mask correlation coefficients across subarrays.
2) Initial source images were obtained by soft masking. Distributed FastMNMF and FastMNMF (all subarrays) used
$\bm{c}_{ijn}^{(\mathrm{init})}=(\bm{c}_{ijn}^{(\mathrm{init})(1)\mathsf{T}}, \bm{c}_{ijn}^{(\mathrm{init})(2)\mathsf{T}}, \bm{c}_{ijn}^{(\mathrm{init})(3)\mathsf{T}})^{\mathsf{T}}$, while FastMNMF (one subarray) used only $\bm{c}_{ijn}^{(\mathrm{init})(1)}$.
3) Initial source spectrograms were computed by
$h_{ijn}^{(\mathrm{init})} = \bm{c}_{ijn}^{(\mathrm{init})\mathsf{H}}\bm{R}_{in}^{(\mathrm{init})-1}\bm{c}_{ijn}^{(\mathrm{init})}\!/\!M$ 
with $\bm{R}_{in}^{(\mathrm{init})}\!=\!\sum_j\bm{c}_{ijn}^{(\mathrm{init})}\bm{c}_{ijn}^{(\mathrm{init})\mathsf{H}}\!/\!J$
 in FastMNMF (all subarrays) and distributed FastMNMF, and by
$h_{ijn}^{(\mathrm{init})}= \bm{c}_{ijn}^{(\mathrm{init})(1)\mathsf{H}}\bm{R}_{in}^{(\mathrm{init})(1)-1}\bm{c}_{ijn}^{(\mathrm{init})(1)}\!/\!M^{(1)}$
 with $\bm{R}_{in}^{(\mathrm{init})(1)}\!=\!\sum_j\bm{c}_{ijn}^{(\mathrm{init})(1)}\bm{c}_{ijn}^{(\mathrm{init})(1)\mathsf{H}}\!/\!J$
 in FastMNMF (one subarray). This procedure is motivated by maximum likelihood estimation under the time-varying complex Gaussian distribution.
4) Initial NMF variables $t_{ikn}^{(\mathrm{init})}$ and $v_{kjn}^{(\mathrm{init})}$ were computed by applying Itakura--Saito NMF to $h_{ijn}^{(\mathrm{init})}$.
\texttt{sklearn.decomposition.NMF} with \texttt{init=`random'}, \texttt{solver=`mu'}, \texttt{beta\_loss=`itakura-saito'}, and \texttt{max\_iter=1000} was used; all other options were left at their default values.
5) Initial transformation matrices were computed. In FastMNMF (all subarrays), $\bm{W}_{i}^{(\mathrm{init})}$ was computed by jointly diagonalizing
 $\bm{R}_{i,N-1}^{(\mathrm{init})}$ and $\bm{R}_{iN}^{(\mathrm{init})}$ via a generalized eigenvalue problem.
 Likewise, $\bm{W}_{i}^{(\mathrm{init})(l)}$ was computed from the $l$th diagonal blocks $\bm{R}_{i,N-1}^{(\mathrm{init})(l)}$ and $\bm{R}_{iN}^{(\mathrm{init})(l)}$
 of $\bm{R}_{i,N-1}^{(\mathrm{init})}$ and $\bm{R}_{iN}^{(\mathrm{init})}$ for each $l$ in distributed FastMNMF,
 and $\bm{W}_{i}^{(\mathrm{init})(1)}$ from $\bm{R}_{i,N-1}^{(\mathrm{init})(1)}$ and $\bm{R}_{iN}^{(\mathrm{init})(1)}$ in FastMNMF (one subarray).
 6) Initial diagonalized SCMs were computed by
 $\bm{\Lambda}_{in}^{(\mathrm{init})}=\operatorname{ddiag}(\bm{W}_{i}^{(\mathrm{init})\mathsf{H}} \bm{R}_{in}^{(\mathrm{init})} \bm{W}_{i}^{(\mathrm{init})})$
 in FastMNMF (all subarrays),
 $\bm{\Lambda}_{in}^{(\mathrm{init})(l)}=\operatorname{ddiag}(\bm{W}_{i}^{(\mathrm{init})(l)\mathsf{H}} \bm{R}_{in}^{(\mathrm{init})(l)} \bm{W}_{i}^{(\mathrm{init})(l)})$
 in distributed FastMNMF, and
 $\bm{\Lambda}_{in}^{(\mathrm{init})(1)}=\operatorname{ddiag}(\bm{W}_{i}^{(\mathrm{init})(1)\mathsf{H}} \bm{R}_{in}^{(\mathrm{init})(1)} \bm{W}_{i}^{(\mathrm{init})(1)})$
 in FastMNMF (one subarray). Here, $\operatorname{ddiag}$ is the projection onto the diagonal matrices.

The source-to-distortion ratio (SDR) improvement was computed at the reference microphone.
We used \texttt{fast\_bss\_eval.sdr} with \texttt{filter\_length=512}, which finds the global permutation by maximizing the sum of SDRs.\footnote{The code is available at \url{https://github.com/fakufaku/fast_bss_eval}.}
The reported mean SDR improvement was computed by averaging over sources, $10$ NMF initializations, and $120$ mixtures.
 Since the evaluation uses the fixed reference microphone, the results should be interpreted as output quality at that microphone rather than
 as an average over all microphones.

\vspace{-2pt}
\subsection{Experimental results}
\vspace{-2pt}

\begin{figure}[t]
\centering
\includegraphics[width=\columnwidth]{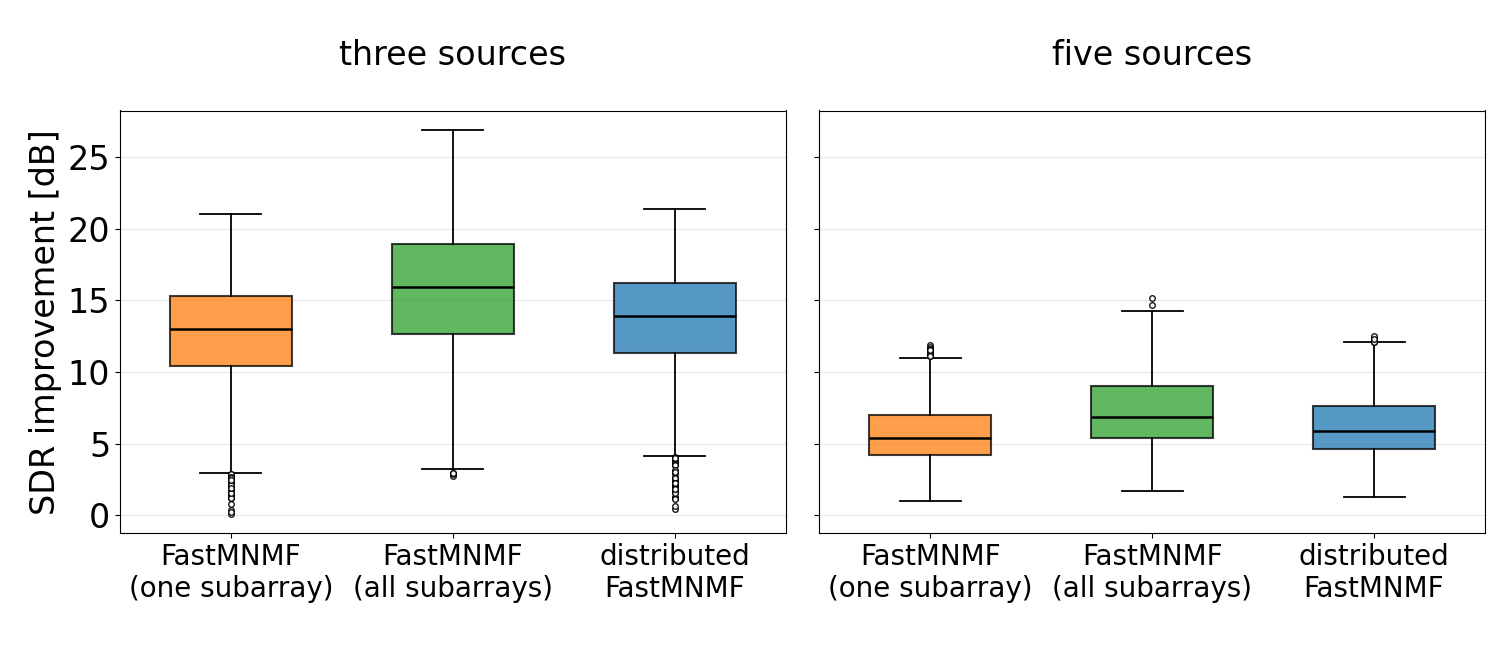}
\vspace{-25pt}
\caption{Box plots of the SDR improvement for each method and number of sources. The plots summarize the distributions over the evaluated mixtures and NMF initializations. Outliers are marked with circles.}
\label{fig:sdr}
\vspace{-15pt}
\end{figure}

Figure~\ref{fig:sdr} summarizes the SDR improvement distributions.
For the three-source case, distributed FastMNMF achieved a mean SDR improvement of $13.4$\,dB (median: $13.9$\,dB, standard error (SE): $0.114$\,dB),
 compared with $12.5$\,dB (median: $13.0$\,dB, SE: $0.110$\,dB) for FastMNMF (one subarray) and $15.7$\,dB (median: $15.9$\,dB, SE: $0.142$\,dB) for FastMNMF (all subarrays).
For the five-source case, the corresponding mean SDR improvements were $6.3$\,dB (median: $5.9$\,dB, SE: $0.064$\,dB), $5.8$\,dB (median: $5.4$\,dB, SE: $0.060$\,dB),
 and $7.3$\,dB (median: $6.9$\,dB, SE: $0.076$\,dB), respectively.
The SDR improvement was first averaged over sources for each mixture and NMF initialization.
 The reported mean and median summarize the evaluated mixtures and initializations.
 The standard error was computed over $1200$ mixture-initialization trials.

Distributed FastMNMF yielded higher average SDR improvement than FastMNMF (one subarray) for both source numbers.
 The average gains over FastMNMF (one subarray) were $0.8$\,dB for three sources and $0.5$\,dB for five sources.
 These results indicate that sharing the source spectrogram model across subarrays is beneficial in the tested setting.
 A complementary experiment under the same conditions as in Fig.~\ref{fig:sdr} showed that
 distributed FastMNMF with source spectrograms estimated independently per subarray yielded mean SDR improvements matching
 FastMNMF (one subarray) to machine precision across all 10 NMF initializations and 120 mixtures.
 Distributed FastMNMF yielded lower SDR improvement than FastMNMF (all subarrays), which is expected because it discards inter-subarray covariance
 and phase relations and applies the multichannel Wiener filter within each local subarray.
 Note that the five-source condition is underdetermined with respect to each four-microphone subarray, but not with respect to the full 12-microphone array.  




\begin{table}[t]
\centering
\caption{Computation time for each method in the three-source condition using a fixed $10$-s-long mixture and fixed NMF initialization seed (mean $\pm$ SE over 10 trials).}
\label{tab:time}
\vspace{-9pt}
\begin{tabular}{c|c}
\toprule
Method & Computation time [s] \\
\midrule
FastMNMF (one subarray)  & $109.3 \pm 0.3$ \\
FastMNMF (all subarrays) & $694.0 \pm 0.7$ \\
Distributed FastMNMF  & $235.3 \pm 2.4$ \\
\bottomrule
\end{tabular}
\end{table}

\begin{figure}[t]
\centering
\includegraphics[width=0.9\columnwidth]{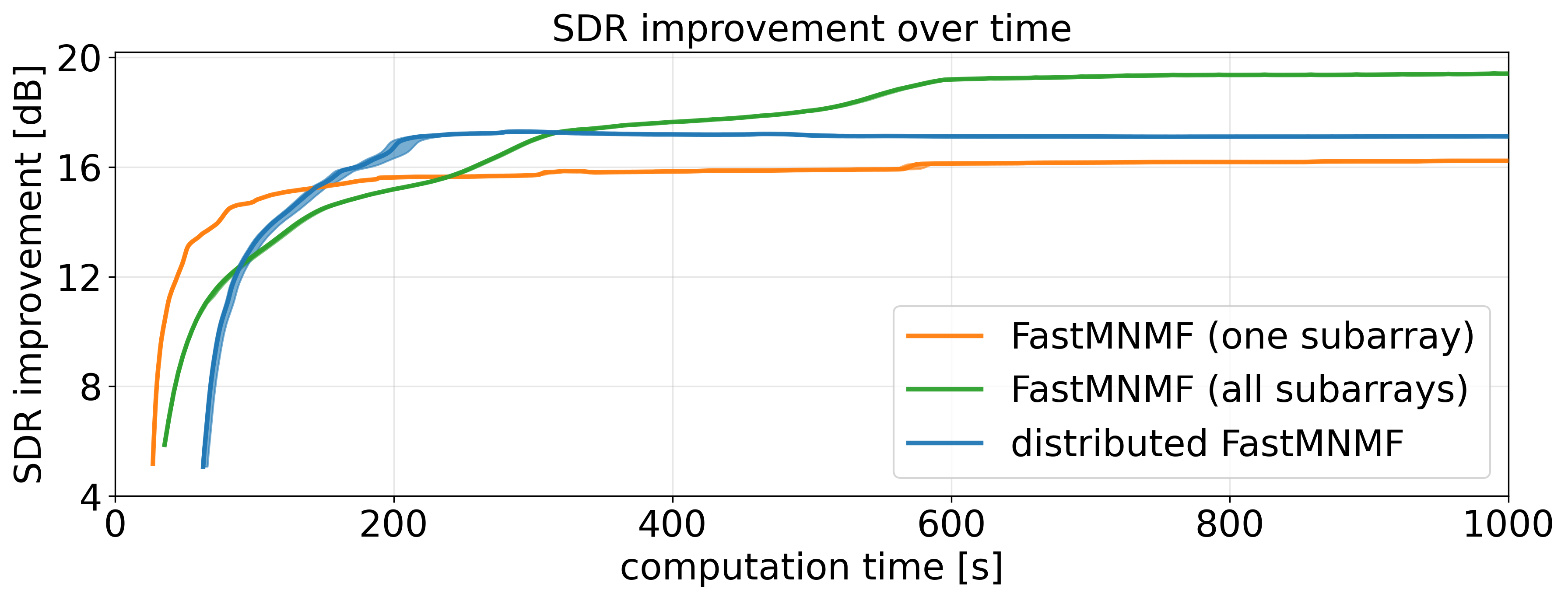}
\vspace{-15pt}
\caption{SDR improvement versus computation time for each method in the three-source condition using a fixed $10$-s-long mixture and a fixed NMF initialization seed, 
averaged over 10 trials. Shaded bands denote min-max computation time.}
\label{fig:sdr_time}
\vspace{-12pt}
\end{figure}

\vspace{-2pt}
\subsection{Evaluation of computation time}
\vspace{-2pt}
Computation time was measured over $10$ trials for the three-source condition, using a fixed $10$-s mixture, fixed NMF seed, and $200$ iterations for all methods.
 All methods were run on an AMD Ryzen 5 5600X ($3.7$\,GHz) processor using a single thread.
 These timings quantify the runtime in the present implementation and experimental setting.
 They are not intended as a scaling study over the number of subarrays, microphones, sources, or room conditions. 

Table~\ref{tab:time} shows the average computation time.
Distributed FastMNMF required $235.3$\,s on average, $33.9$\% of the average runtime of FastMNMF (all subarrays), corresponding to a $2.95\times$ speedup. It required $215$\% of that of FastMNMF (one subarray).

From Table~\ref{tab:complexity_breakdown}, if all subarrays have the same number of microphones, the proposed method reduces matrix inversion and scalar-matrix multiplication costs by factors of $L^3$ and $L^2$, respectively, relative to FastMNMF (all subarrays).
However, the $\mathcal{O}\left(JN\left(K+M\right)\right)$ part corresponding to the updates of the NMF variables and diagonalized SCMs remains unchanged.
 Therefore, with three subarrays, the speedup was smaller than the asymptotic 27-fold or 9-fold gain.

Figure~\ref{fig:sdr_time} shows SDR improvement over time, computed from the separated signals at each iteration.
 This evaluation step was excluded from the runtime.
At approximately 150--300\,s, distributed FastMNMF achieved higher SDR improvement than both FastMNMF (one subarray) and FastMNMF (all subarrays) before convergence, balancing the trade-off between SDR improvement and computation time.


\vspace{-5pt}
\section{Conclusion}
\vspace{-2pt}
We proposed distributed FastMNMF for distributed microphone arrays by imposing block-diagonality on the SCMs.
In experiments in a synchronized, noiseless simulated room, the proposed method was faster than FastMNMF (all subarrays)
 and improved SDR over FastMNMF (one subarray), including in a locally underdetermined condition.
 Future work will address diffuse noise, asynchronous recording, larger arrays, and wider reverberation and geometry conditions.

\section*{Appendix: Relationship between \eqref{eq:joint} and \eqref{eq:joint_diagonalizability}}
\label{appendix}

\begin{theorem}
Let $N,L,M^{(1)},\ldots,M^{(L)}\in\mathbb{Z}_{>0}$, $M:=\sum_l M^{(l)}$, and $\bm{R}_n=\blkdiag(\bm{R}_n^{(1)},\ldots,\bm{R}_n^{(L)})\in\mathbb{S}^M_+$,
 where $\bm{R}_n^{(l)}\in\mathbb{S}^{M^{(l)}}_+$ for $n=1,\ldots,N$ and $l=1,\ldots,L$. $\mathbb{Z}_{>0}$ is the set of positive integers.
 Let $\bm{S}:=\sum_n\bm{R}_n$ be positive definite.\footnote{This condition holds under the positive definiteness assumption on the observed covariance matrix.}
Then, $\bm{R}_1,\ldots,\bm{R}_N$ are jointly diagonalizable iff,
for every $l=1,\ldots,L$, $\bm{R}_1^{(l)},\ldots,\bm{R}_N^{(l)}$ are jointly diagonalizable.
\end{theorem}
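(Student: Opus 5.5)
Joint diagonalizability here means congruence by a nonsingular matrix $\bm{W}$ with $\bm{W}^{\mathsf{H}}\bm{R}_n\bm{W}$ diagonal for all $n$, as in \eqref{eq:joint}. The plan is to reduce congruence-diagonalization to ordinary unitary simultaneous diagonalization of commuting Hermitian matrices by whitening with $\bm{S}$. The direction ``if'' is immediate. Given $\bm{W}^{(l)}$ for each $l$, set $\bm{W}:=\blkdiag(\bm{W}^{(1)},\ldots,\bm{W}^{(L)})$. It is nonsingular, and
\[
\bm{W}^{\mathsf{H}}\bm{R}_n\bm{W}=\blkdiag\bigl(\bm{W}^{(1)\mathsf{H}}\bm{R}_n^{(1)}\bm{W}^{(1)},\ldots,\bm{W}^{(L)\mathsf{H}}\bm{R}_n^{(L)}\bm{W}^{(L)}\bigr),
\]
which is diagonal. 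This direction does not need $\bm{S}\succ 0$.

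For ``only if'' I would first prove a lemma. Let $\bm{S}=\sum_n\bm{R}_n$ be positive definite. Then $\bm{R}_1,\ldots,\bm{R}_N$ are jointly diagonalizable iff the Hermitian matrices $\bm{A}_n:=\bm{S}^{-1/2}\bm{R}_n\bm{S}^{-1/2}$ pairwise commute.

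\emph{Proof of the lemma, forward direction.} Suppose $\bm{W}^{\mathsf{H}}\bm{R}_n\bm{W}=\bm{\Lambda}_n$. Then $\bm{W}^{\mathsf{H}}\bm{S}\bm{W}=\bm{D}:=\sum_n\bm{\Lambda}_n$, which is diagonal and positive definite. Put $\bm{U}:=\bm{S}^{1/2}\bm{W}\bm{D}^{-1/2}$. Then $\bm{U}^{\mathsf{H}}\bm{U}=\bm{D}^{-1/2}\bm{W}^{\mathsf{H}}\bm{S}\bm{W}\bm{D}^{-1/2}=\bm{I}$, so $\bm{U}$ is unitary. Also $\bm{U}^{\mathsf{H}}\bm{A}_n\bm{U}=\bm{D}^{-1/2}\bm{\Lambda}_n\bm{D}^{-1/2}$ is diagonal. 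So all $\bm{A}_n$ are diagonal in one unitary basis and therefore commute.

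\emph{Proof of the lemma, converse.} Commuting Hermitian matrices are simultaneously unitarily diagonalizable by the standard spectral theorem. If $\bm{U}$ does this, then $\bm{W}:=\bm{S}^{-1/2}\bm{U}$ is nonsingular and jointly diagonalizes the $\bm{R}_n$.

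With the lemma in hand, the ``only if'' direction follows from block structure. Because $\bm{S}=\blkdiag(\bm{S}^{(1)},\ldots,\bm{S}^{(L)})$ with $\bm{S}^{(l)}:=\sum_n\bm{R}_n^{(l)}$, positive definiteness of $\bm{S}$ gives $\bm{S}^{(l)}\succ 0$ for every $l$. The principal square root of a block-diagonal positive definite matrix is block-diagonal, so $\bm{S}^{-1/2}=\blkdiag(\bm{S}^{(1)-1/2},\ldots,\bm{S}^{(L)-1/2})$. Hence
\[
\bm{A}_n=\blkdiag(\bm{A}_n^{(1)},\ldots,\bm{A}_n^{(L)}),\qquad \bm{A}_n^{(l)}:=\bm{S}^{(l)-1/2}\bm{R}_n^{(l)}\bm{S}^{(l)-1/2}.
\]
Products of block-diagonal matrices are computed blockwise, so $\bm{A}_n\bm{A}_{n'}=\bm{A}_{n'}\bm{A}_n$ holds iff $\bm{A}_n^{(l)}\bm{A}_{n'}^{(l)}=\bm{A}_{n'}^{(l)}\bm{A}_n^{(l)}$ for all $l$. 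Applying the lemma once to the full array (with $\bm{S}$) and once to each subarray (with $\bm{S}^{(l)}$) gives the equivalence. This argument actually proves both directions at once.

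The main obstacle is the lemma, specifically making sure that a \emph{non-unitary}, non-block-diagonal $\bm{W}$ still forces commutativity. The whitening normalization $\bm{U}=\bm{S}^{1/2}\bm{W}\bm{D}^{-1/2}$ is the key step. It is also where $\bm{S}\succ 0$ is used: without it, $\bm{D}$ could be singular, and the claim can fail when a common kernel mixes blocks.
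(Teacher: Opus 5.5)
Your proof is correct and follows the paper's route. Both arguments whiten by $\bm{S}^{-1/2}$, use the fact that joint diagonalizability is equivalent to pairwise commutation of the whitened matrices, and observe that the commutators are block-diagonal with the per-subarray commutators as blocks; the only difference is that you prove the commutation criterion yourself, via the unitary $\bm{U}=\bm{S}^{1/2}\bm{W}\bm{D}^{-1/2}$ and the spectral theorem, where the paper cites it from the literature.

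One correction to your closing aside: the common kernel of positive semidefinite block-diagonal $\bm{R}_n$ can never mix blocks, since $\bigcap_n\ker\bm{R}_n=\bigoplus_l\bigcap_n\ker\bm{R}_n^{(l)}$. So $\bm{S}\succ\bm{O}$ is a convenience for the whitening step rather than a hypothesis the equivalence needs; without it, one can restrict to the orthogonal complement of that common kernel and reach the same conclusion.
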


{\setlength{\parindent}{-30pt}
\begin{proof}
 Define $\bm{S}^{(l)}:=\sum_n\bm{R}_n^{(l)}$, $\widetilde{\bm{R}}_n:=\bm{S}^{-1/2}\bm{R}_n\bm{S}^{-1/2}$,
 and $\widetilde{\bm{R}}_n^{(l)}:=(\bm{S}^{(l)})^{-1/2}\bm{R}_n^{(l)}(\bm{S}^{(l)})^{-1/2}$.
 Then $\bm{S}^{(l)}\succ\bm{O}$ and $\widetilde{\bm{R}}_n=\blkdiag(\widetilde{\bm{R}}_n^{(1)},\ldots,\widetilde{\bm{R}}_n^{(L)})$.
 By \cite{Wang2025}, $\bm{R}_1,\ldots,\bm{R}_N$ are jointly diagonalizable iff
 $[\widetilde{\bm{R}}_n,\widetilde{\bm{R}}_{n'}]:=\widetilde{\bm{R}}_n\widetilde{\bm{R}}_{n'}-\widetilde{\bm{R}}_{n'}\widetilde{\bm{R}}_n=\bm{O}, \quad\forall n,n'$. Since $[\widetilde{\bm{R}}_n,\widetilde{\bm{R}}_{n'}]=\blkdiag([\widetilde{\bm{R}}_n^{(1)},\widetilde{\bm{R}}_{n'}^{(1)}],\ldots,[\widetilde{\bm{R}}_n^{(L)},\widetilde{\bm{R}}_{n'}^{(L)}])$,
 the theorem has been proven.
\end{proof}
}

\vspace{-2pt}
\printbibliography
\balance
\end{document}